\newcommand{\bx}{\mathbf{x}}    
\newcommand{\bw}{\mathbf{w}}    
\newcommand{\bH}{\mathbf{H}}    
\newcommand{\be}{\mathbf{e}}    
\newcommand{\bh}{\mathbf{h}}    
\newcommand{\bN}{\mathbf{N}}    
\newcommand{\bC}{\mathbf{C}}    
\newcommand{\bI}{\mathbf{I}}    
\newcommand{\bV}{\mathbf{V}}    
\newcommand{\bS}{\mathbf{S}}    
\newcommand{\bX}{\mathbf{X}}    
\newcommand{\bY}{\mathbf{Y}}    
\newcommand{\bU}{\mathbf{U}}    
\newcommand{\bA}{\mathbf{A}}    
\newcommand{\bB}{\mathbf{B}}    
\newcommand{\define}{\triangleq}    
\newcommand{\mbbC}{\mathbb{C}}	
\newcommand{\btt}{\bsym{\theta}}
\newcommand{\ben}{\begin{enumerate}} 	  	
	\newcommand{\een}{\end{enumerate}} 			
\newcommand{\beq}{\begin{equation}} 	  	
\newcommand{\eeq}{\end{equation}} 			
\newcommand{\bes}{\begin{equation*}}
\newcommand{\ees}{\end{equation*}}
\newcommand{\bea}{\begin{eqnarray}}		
\newcommand{\eea}{\end{eqnarray}} 		
\newcommand{\beas}{\begin{eqnarray*}}
	\newcommand{\eeas}{\end{eqnarray*}}
\newcommand{\ba}{\begin{array}}
	\newcommand{\ea}{\end{array}}
\newcommand{\sbea}{\nopagebreak[3]\samepage\begin{eqnarray}}
\newcommand{\seea}{\end{eqnarray}\pagebreak[0]}
\newcommand{\sbeas}{\nopagebreak[3]\samepage\begin{eqnarray*}}
	\newcommand{\seeas}{\end{eqnarray*}\pagebreak[0]}
\newcommand{\lb}{\label}
\newcommand{\er}[1]{{\rm(\ref{#1})}}
\newcommand{\bit}{\begin{itemize}}
	\newcommand{\eit}{\end{itemize}}
\newcommand{\bsym}{\boldsymbol}
\newcommand{\nn}{\nonumber}
\DeclareMathOperator{\Trace}{Tr}
\newtheorem{theorem}{Theorem}
\newenvironment{proof}[1][Proof]{\begin{trivlist}
		\item[\hskip \labelsep {\bfseries #1}]}{\end{trivlist}}
\newcommand{\qed}{\nobreak \ifvmode \relax \else
	\ifdim\lastskip<1.5em \hskip-\lastskip
	\hskip1.5em plus0em minus0.5em \fi \nobreak
	\vrule height0.75em width0.5em depth0.25em\fi}
\def\iflatex{\iftrue}
\def\ifcomments{\iffalse}
\begin{document} 
    \title{Antenna Impedance Estimation in Correlated Rayleigh Fading Channels}
    \twoauthors
    {Shaohan Wu}
    {}
    {Brian L. Hughes}{}
    
    \maketitle
    \begin{abstract}
        We formulate antenna impedance estimation in a classical estimation framework under correlated Raleigh fading channels. Based on training sequences of multiple packets, we derive the  ML estimators for antenna impedance and channel variance, treating the fading path gains as nuisance parameters.  These ML estimators can be found via scalar optimization. 
        We explore the efficiency of these estimators against Cramer-Rao lower bounds by numerical examples. The impact of channel correlation on impedance estimation accuracy is investigated.
    \end{abstract}
    
    \begin{keywords}
        Impedance Estimation, Channel Correlation, Scalar Optimization, Maximum-Likelihood Estimation.
    \end{keywords}
    
    \section{Introduction}
    Antenna impedance matching to the receiver front-end has been shown to significantly impact the capacity and diversity of wireless channels \cite{domi2}.
    This matching becomes challenging as antenna impedance changes with time-varying near-field loading, e.g., human users. 
    To mitigate this change, antenna impedance estimation techniques at mobile receivers have been proposed \cite{hass,wu_pca, wu_ciss18,wu_ciss21,wu_hybrid18,wu_pca_mimo}.
    
    Hassan and Wittneben proposed least-square estimators to jointly estimate the spatial channel and coupling impedance matrices \cite{hass}. Wu and Hughes first derived joint channel and antenna impedance estimators at single-antenna receivers using a hybrid estimation framework\cite{wu_hybrid18}. Wu extended it to multi-antenna receivers \cite{wu_ciss21}. Under classical estimation, the maximum-likelihood (ML) estimators of antenna impedance have been derived under i.i.d. Rayleigh fading, treating channel path gains as nuisance parameters \cite{wu_pca,wu_pca_mimo}. 
    However, the optimal impedance estimator remains unknown when the channel is correlated. In this paper, we fill this gap. 
    
    We formulate antenna impedance estimation in a classical estimation framework under correlated Raleigh fading channels. Based on training sequences of multiple packets, we derive the  ML estimators for antenna impedance and channel variance, treating the fading path gains as nuisance parameters.  These ML estimators can be found via scalar optimization. 
    We explore the performance, e.g., efficiency against Cramer-Rao lower bounds, of these estimators through numerical examples. The impact of channel correlation on impedance estimation accuracy is also investigated.
    
    The rest of the paper is organized as follows. We present the system model in Sec.~\ref{3secII} and derive the ML estimators in Sec.~\ref{sec_ml}. We explore the performance of these estimators through numerical examples in Sec.~\ref{3secV} and conclude in Sec.~\ref{3secVI}.
    
    \section{System Model}
    \lb{3secII}
    Consider a narrow-band, multiple-input, single-output (MISO) channel with $N$ transmit antennas and one receive antenna. 
    Suppose the transmitter sends $L$ packets each with an identical training sequence to the receiver. During transmission, the receiver front-end shifts halfway in the training sequence\cite[eq.~7]{wu_pca}, to observe the unknown antenna impedance. We assume the channel
    is constant {\em within} a packet, but generally varies from packet to packet randomly.
    Under these assumptions, the signal observed during the $k$-th packet can be described by \cite[eq.~10]{wu_pca}, with $K$ assumed even, 
    \beq\label{3observationsMP}
    u_{k,t}  \ = \ \begin{cases}
        \bh_k^T \bx_t + n_{k,t} \ , & 1 \leq t \leq K/2 \ , \\
        F \bh_k^T \bx_t  + n_{k,t} \ , & K/2 < t \leq K \ ,
    \end{cases}
    \eeq
    where $F$ is a function of the unknown antenna impedance \cite[eq.~11]{wu_pca}, $\bh_k$ is the channel during $k$-th packet, and the noise $n_{k,t} \sim\mathcal{CN}(0, 1)$ is i.i.d..
    We can express \er{3observationsMP} in matrix form, 
    \beq\label{3observationsMP2}
    \bU_1 = \bH \bX_1 + \bN_1 \ , ~~~\bU_2 = F \bH \bX_2 + \bN_2
    \eeq
    where $\bX_1 \define [ \bx_1 , \ldots, \bx_{K/2}]$, $\bX_2 \define [ \bx_{K/2+1} , \ldots, \bx_K]$,
    \beq\label{H_ch3}
    \bH \ \define \ 
    [ \bh_1 , \ldots, \bh_L]^T \in \mbbC^{L\times N} \ . 
    \eeq 
    It follows $\bN_1$ and $\bN_2$ are independent random matrices with i.i.d. $\mathcal{CN}(0, 1)$ entries. Note the horizontal dimension of $\bH$ represents space, while the vertical dimension is time.  Here $\bH$ models Rayleigh fading path gains which are uncorrelated in space but generally correlated in time. This implies the columns of $\bH$ are i.i.d. zero-mean, complex Gaussian random vectors with a temporal correlation matrix $\sigma_h^2 \bC_\bH$. We assume the correlation matrix $\bC_\bH$ is known but the power $\sigma_h^2$ is unknown.  As in our prequel, we assume the known sequences $\bX_1$ and $\bX_2$ are equal-energy and orthogonal over the first and last $K$ symbols \cite[eq.~16]{wu_pca},
    \beq\label{XsMP}
    \bX_1\bX_1^H \ = \ \bX_2\bX_2^H \ = \ \left( \frac{PK}{2N} \right) \bI_N \ .
    \eeq
    
    \section{Maximum-Likelihood Estimators}\label{sec_ml}
    The goal of this paper is to derive optimal estimators for 
    \beq\label{eq_theta}
    \btt \ \define \begin{bmatrix}
        F & \sigma_h^2
    \end{bmatrix}^T
    \eeq
    based on the observations \er{3observationsMP2}. To this end, we leverage the classical estimation framework by treating $\bH$ as a nuisance parameter. We first prove the sufficiency of observations. 
    
    \begin{theorem}[Sufficient Statistics]\lb{sslemmaMP}  Given $\bU_1$ and $\bU_2$ defined in
        \eqref{3observationsMP2}, where $\bX_1,\bX_2$ are known training sequences \er{XsMP} and $\btt$ in \er{eq_theta} are unknown, then
        \begin{eqnarray}\label{ysMP}
            \bY_1 \ \define \ \sigma^2 \bU_1 \bX_1^H \ , \ \ \bY_2 \ \define \ \sigma^2\bU_1 \bX_2^H  ,
        \end{eqnarray}
        are sufficient statistics to estimate $\btt$, where we define
        \beq\label{eq_sigma}
        \sigma^2 \ \define \ \frac{2N}{PK} \ .
        \eeq
        Moreover, $\bY_1-\bH$ and $\bY_2-F\bH$ are independent random matrices with i.i.d. $\mathcal{CN}(0, \sigma^2)$ entries. \hfill $\diamond$
    \end{theorem}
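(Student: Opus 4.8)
The plan is to apply the Fisher--Neyman factorization theorem after ``rotating'' each observation matrix onto the row space of its training sequence; throughout write $\bY_i \define \sigma^2 \bU_i \bX_i^H$ for $i=1,2$. The one algebraic fact that drives everything is that \er{XsMP} together with \er{eq_sigma} gives $\sigma^2 \bX_i \bX_i^H = \bI_N$. Substituting \er{3observationsMP2} then yields
\beq
\bY_1 \ = \ \bH + \sigma^2 \bN_1 \bX_1^H \ , \qquad \bY_2 \ = \ F\bH + \sigma^2 \bN_2 \bX_2^H \ ,
\eeq
so $\bY_1 - \bH = \sigma^2 \bN_1 \bX_1^H$ and $\bY_2 - F\bH = \sigma^2 \bN_2 \bX_2^H$ are functions of the noise alone. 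This already settles the ``moreover'' assertion: distinct rows of $\sigma^2 \bN_i \bX_i^H$ are independent, and each row is zero-mean complex Gaussian with covariance $\sigma^4 \bX_i \bX_i^H = \sigma^2 \bI_N$ by \er{XsMP}; independence of $\bY_1-\bH$ and $\bY_2-F\bH$ follows from $\bN_1 \perp \bN_2$. This is the only place the particular value $\sigma^2 = 2N/PK$ is used.

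For the sufficiency claim I would start from the conditional model in which $\bH$ is held fixed. Since $\bN_1,\bN_2$ have i.i.d. $\mathcal{CN}(0,1)$ entries and are independent, $-\log p(\bU_1,\bU_2\mid\bH,F)$ equals, up to a constant, $\|\bU_1-\bH\bX_1\|_F^2 + \|\bU_2-F\bH\bX_2\|_F^2$. Expanding each Frobenius norm via $\Trace$, the term $\Trace(\bU_i\bU_i^H)$ involves neither $\bH$, $F$, nor $\sigma_h^2$; the cross terms involve $\bU_i$ only through $\bU_i\bX_i^H = \sigma^{-2}\bY_i$; and the last term equals $\sigma^{-2}\Trace(\bH\bH^H)$ (for $i=1$) or $|F|^2\sigma^{-2}\Trace(\bH\bH^H)$ (for $i=2$) by \er{XsMP}, so it does not involve the data. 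Hence $p(\bU_1,\bU_2\mid\bH,F) = h(\bU_1,\bU_2)\, g(\bY_1,\bY_2,\bH,F)$, i.e. $(\bY_1,\bY_2)$ is a sufficient statistic for $(\bH,F)$ in the conditional model.

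To reach the actual model I would integrate out $\bH$, whose columns are i.i.d. $\mccn(\bzro,\sigma_h^2\bC_\bH)$, and write
\beq
p(\bU_1,\bU_2\mid\btt) \ = \ \int p(\bU_1,\bU_2\mid\bH,F)\, p(\bH\mid\sigma_h^2)\, d\bH \ = \ h(\bU_1,\bU_2)\, \tilde{g}(\bY_1,\bY_2,\btt) \ ,
\eeq
where $\tilde{g}(\bY_1,\bY_2,\btt) \define \int g(\bY_1,\bY_2,\bH,F)\, p(\bH\mid\sigma_h^2)\, d\bH$ manifestly depends on the data only through $(\bY_1,\bY_2)$; a second application of Fisher--Neyman gives that $(\bY_1,\bY_2)$ is sufficient for $\btt$. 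An equivalent and perhaps more transparent route is to decompose $\bU_i = \bY_i\bX_i + \bN_i\bP_i^\perp$, where $\bP_i^\perp \define \bI_{K/2} - \sigma^2\bX_i^H\bX_i$ is the orthogonal projector onto the complement of the row space of $\bX_i$ (orthogonality and idempotence use \er{XsMP}); the linear map $(\bU_1,\bU_2)\mapsto(\bY_1,\bY_2,\bN_1\bP_1^\perp,\bN_2\bP_2^\perp)$ has constant Jacobian, the remainders $\bN_i\bP_i^\perp$ are ancillary, and by joint Gaussianity together with $(\sigma^2\bX_i^H\bX_i)\bP_i^\perp = \bzro$ they are independent of $(\bY_1,\bY_2,\bH)$, hence carry no information about $\btt$.

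The Gaussian density bookkeeping and the trace expansions are routine; the two points that genuinely need care are (i) the normalization identities $\sigma^2\bX_i\bX_i^H = \bI_N$ and $\sigma^4\bX_i\bX_i^H = \sigma^2\bI_N$ coming from \er{XsMP}--\er{eq_sigma}, and (ii) the independence of the informative statistics $(\bY_1,\bY_2)$ from the discarded remainders $\bN_i\bP_i^\perp$, which rests on the orthogonality of the row-space projector and its complement plus joint Gaussianity. I expect (ii) to be the step a careful reader will most want spelled out in full.
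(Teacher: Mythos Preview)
Your proof is correct and follows essentially the same route as the paper: both verify the distributional claim via $\sigma^4\bX_i\bX_i^H=\sigma^2\bI_N$, then obtain sufficiency by expanding the Frobenius norms in the conditional density $p(\bU_1,\bU_2\mid\bH)$ and integrating $\bH$ out (the paper writes this expectation explicitly and displays the resulting factorization as \er{3LemmaMP}). Your additional projector-based decomposition $\bU_i=\bY_i\bX_i+\bN_i\bP_i^\perp$ is a valid alternative not in the paper, and you correctly take $\bY_2=\sigma^2\bU_2\bX_2^H$, fixing an evident typo in the stated definition.
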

    
    \begin{figure*}
        \vspace{-16pt}
        \bea\label{3LemmaMP}
        &&\pi^{LK} \cdot p\left( \bU_1 , \bU_2 ;\btt \right) \ = \ E_\bH\left[ \exp \left( - \left\lVert \bU_1 - \bH \bX_1 \right\rVert^2
        -\left\lVert \bU_2 - F \bH \bX_2 \right\rVert^2 \right) \right]  \nn\\
        &=&  E_\bH \left[ \exp \Biggl( \frac{1}{\sigma^2}\biggl\{ 2 {\rm ReTr}[ \bH^H \bY_1 ] + 2 {\rm ReTr}[ F^* \bH^H \bY_2 ] - (1 + |F|^2) \left\lVert \bH \right\rVert^2 \biggr\} \Biggr) \right] \exp\left(-  \left\lVert \bU_1 \right\rVert^2-\left\lVert \bU_2 \right\rVert^2 \right)  \ , 
        \eea
        \vspace{-16pt}
    \end{figure*}
    
    \begin{proof}
        From \er{3observationsMP2} and \er{XsMP}, we have $\bY_1 = \bH + \sigma^2\bN_1 \bX_1^H$. Note the rows
        of $\sigma^2\bN_1 \bX_1^H$ are i.i.d. with covariance $\sigma^4 \bX_1 \bX_1^H = \sigma^2 \bI_N$, due to \er{XsMP}. Similarly, $\bY_2 = F\bH + \sigma^2\bN_2 \bX_2^H$, where the last matrix has i.i.d $\mathcal{CN}(0, \sigma^2)$ entries. So $\bY_1-\bH$ and $\bY_2-F\bH$ are independent random matrices with i.i.d. $\mathcal{CN}(0, \sigma^2)$ entries.
        
        From the Neyman-Fisher Theorem \cite[pg.~117]{kay}, to prove sufficiency it suffices to show $p( \bU_1 , \bU_2 ; \btt)$ can be factored into a product $g( \bY_1 , \bY_2, \btt )f(\bU_1, \bU_2 )$, where $f$ does not depend on $\bY_1 , \bY_2$ or $\btt$, and $g$ does not depend on $\bU_1, \bU_2$. We can express this pdf in terms of the conditional pdf as
        \bea
        p( \bU_1 , \bU_2 ; \btt) \ = \ E_\bH \biggl[ p( \bU_1 , \bU_2 | \bH ; \btt ) \biggr] \ ,\nn
        \eea
        where $E_\bH [ \cdot ]$ denotes expectation with respect to $\bH$.
        Since $\bU_1 , \bU_2$ are conditionally independent given $\bH$, we can simplify the pdf in to \er{3LemmaMP}, 
        where $\left\lVert \bA \right\rVert^2 = {\rm Tr}[ \bA^H \bA ]$ denotes the Frobenius norm.
        Note identities $2{\rm ReTr}[\bA]={\rm Tr}[\bA]+{\rm Tr}[\bA^H]$ and ${\rm Tr}[\bA \bB] = {\rm Tr}[\bB \bA]$ are used, along with \er{XsMP} and \er{eq_sigma}. 
        In \er{3LemmaMP}, denote the first factor by $\pi^{LK} g( \bY_1 , \bY_2, \btt)$, and the second by $f(\bU_1, \bU_2 )$. 
        Note $f$ does not depend on $\bY_1 , \bY_2$ or $\btt$, while $g$ depends on $\bY_1 , \bY_2, F$ and $\sigma_h^2$ (through the expectation), but not $\bU_1, \bU_2$. \hfill $\diamond$
    \end{proof}
    
    We now present maximum-likelihood (ML) estimators for $\btt$ defined in \er{eq_theta}
    using sufficient statistics \er{ysMP}. By definition, the ML estimators maximize the likelihood function, i.e., 
    \bea\label{3MLdefMP}
    \hat{\btt}_{ML} \ \define \ {\rm arg} \max_{\btt} p( \bY_1 , \bY_2 ; \btt) \ .
    \eea
    Based on this criterion, we show in the following theorem the ML estimators can be calculated directly after a scalar optimization.
    
    \begin{theorem}[Multiple-Packet ML Estimators]\label{3ThrmMLMP}
        Let $\bY_1$ and $\bY_2$ be the sufficient statistics in \er{ysMP}, where $\btt$ in \er{eq_theta} are unknown constants. Consider the matrix
        \beq\label{samp_covMP}
        \bS ( \mu ) \ \define \ \frac{1}{N}\begin{bmatrix}
            S_{11} ( \mu) & S_{12} ( \mu) \\
            S_{21} ( \mu) & S_{22} ( \mu)
        \end{bmatrix}\ .
        \eeq
        where we define for $1 \leq i,j \leq 2$, 
        \beq
        S_{ij} ( \mu ) \ \define \ {\rm Tr}  \left[ \mu \bC_\bH \left(\mu \bC_\bH + \sigma^2 \bI_L \right)^{-1} \bY_i \bY_j^H \right].
        \eeq
        With $\sigma^2$ in \er{eq_sigma} and $\bC_\bH$ known, we define a scalar optimization problem
        \bea\label{3muhatMP}
        \hat{\mu} \ \define
        {\rm arg} \ \max_{\mu \geq 0} \  \ \left[ \eta ( \mu ) - \sigma^2 \ln {\rm det} [\mu \bC_\bH + \sigma^2 \bI_L ] \right] \ ,
        \eea
        where $\eta ( \mu )$ is the largest eigenvalue of $\bS ( \mu)$ in \er{samp_covMP}:
        \beq\label{3etaMP}
        \eta ( \mu ) \ \define \ \frac{S_{22}+S_{11} + \sqrt{(S_{11}-S_{22} )^2+4|S_{12}|^2}}{2} \ .
        \eeq
        Let $\hat{\be_1}=[E_1,E_2]^T$ be any unit eigenvector of $\bS ( \hat{\mu})$ corresponding to the eigenvalue $\eta ( \hat{\mu} )$.
        Then the maximum-likelihood estimates of $F$ and $\sigma_h^2$ are given by
        \beq\label{3MLEMP}
        \hat{\btt}_{ML} \ = \ \begin{bmatrix}
            \hat{F}_{ML}\\
            \hat{\sigma_h^2}
        \end{bmatrix}
        \ = \ \begin{bmatrix}
            {E_2}/{E_1}\\
            |E_1|^2 \hat{\mu}
        \end{bmatrix} \ ,
        \eeq
        provided
        $E_1\neq 0$. For $E_1 =0$ and $\hat{\mu} > 0$ the likelihood is maximized in the limit as $F \rightarrow \infty$.  $\hfill\diamond$
    \end{theorem}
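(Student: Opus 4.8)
The plan is to integrate out the nuisance matrix $\bH$ and write the likelihood of the sufficient statistics in a form in which the dependence on $(F,\sigma_h^2)$ enters only through a rank-one direction and a scalar power. First I would vectorize by columns. By Theorem~\ref{sslemmaMP}, if $\bh_n,\by_{1,n},\by_{2,n}$ denote the $n$-th columns of $\bH,\bY_1,\bY_2$, then $\by_{1,n}=\bh_n+\be_{1,n}$ and $\by_{2,n}=F\bh_n+\be_{2,n}$, where $\bh_n\sim\mccn(\bzro,\sigma_h^2\bC_\bH)$ and $\be_{1,n},\be_{2,n}\sim\mccn(\bzro,\sigma^2\bI_L)$ are mutually independent and i.i.d.\ across $n$. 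Hence the stacked columns $\bsz_n\define[\by_{1,n}^T,\by_{2,n}^T]^T$ are i.i.d.\ $\mccn(\bzro,\bR)$ with $\bR=\sigma_h^2\,(\bv\bv^H)\otimes\bC_\bH+\sigma^2\bI_{2L}$ and $\bv\define[1,F]^T$, and, collecting columns into $\bZ\define[\bY_1^T,\bY_2^T]^T$ (so that $\bZ\bZ^H$ is the $2\times2$ block matrix with blocks $\bY_i\bY_j^H$), the log-likelihood is $\ln p(\bY_1,\bY_2;\btt)=c-N\ln\det\bR-\Trace[\bR^{-1}\bZ\bZ^H]$ for a $\btt$-independent constant $c$.

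Next I would reparametrize and simultaneously diagonalize. Put $\bw\define\bv/\lVert\bv\rVert$ and $\mu\define\sigma_h^2\lVert\bv\rVert^2=\sigma_h^2(1+|F|^2)\geq0$; taking $\bw$ with first entry $w_1=(1+|F|^2)^{-1/2}>0$, the correspondence $(F,\sigma_h^2)\leftrightarrow(\bw,\mu)$ is a bijection onto $\{\lVert\bw\rVert=1,\,w_1>0\}\times[0,\infty)$, while $F=w_2/w_1$ and $\sigma_h^2=|w_1|^2\mu$ are unchanged if $\bw$ is rescaled by a phase, and the boundary $w_1=0$ corresponds to $F\to\infty$. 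For a unitary $\bQ=[\bw,\bw^{\perp}]\in\mbbC^{2\times2}$ one has $(\bQ\otimes\bI_L)^H\bR(\bQ\otimes\bI_L)=\diag(\mu\bC_\bH+\sigma^2\bI_L,\ \sigma^2\bI_L)$, so $\det\bR=(\sigma^2)^L\det(\mu\bC_\bH+\sigma^2\bI_L)$, and, with $\bW_1\define w_1^*\bY_1+w_2^*\bY_2$ and $\bW_2$ defined analogously from $\bw^{\perp}$,
\[\Trace[\bR^{-1}\bZ\bZ^H] \ = \ \Trace[(\mu\bC_\bH+\sigma^2\bI_L)^{-1}\bW_1\bW_1^H] \ + \ \sigma^{-2}\lVert\bW_2\rVert^2 .\]
Since $\bQ\otimes\bI_L$ is unitary, $\lVert\bW_1\rVert^2+\lVert\bW_2\rVert^2=\lVert\bY_1\rVert^2+\lVert\bY_2\rVert^2$ is constant in $\btt$, and the resolvent identity $(\mu\bC_\bH+\sigma^2\bI_L)^{-1}-\sigma^{-2}\bI_L=-\sigma^{-2}\bM(\mu)$ with $\bM(\mu)\define\mu\bC_\bH(\mu\bC_\bH+\sigma^2\bI_L)^{-1}$ then collapses the log-likelihood to $\ln p=c'+\sigma^{-2}\Trace[\bM(\mu)\bW_1\bW_1^H]-N\ln\det(\mu\bC_\bH+\sigma^2\bI_L)$ up to a $\btt$-independent constant $c'$. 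Expanding $\bW_1\bW_1^H=\sum_{i,j}w_i^*w_j\bY_i\bY_j^H$ and recognizing $S_{ij}(\mu)=\Trace[\bM(\mu)\bY_i\bY_j^H]$ gives $\Trace[\bM(\mu)\bW_1\bW_1^H]=N\,\bw^H\bS(\mu)\bw$ with $\bS(\mu)$ as in \eqref{samp_covMP}.

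The last step separates the two maximizations. For each fixed $\mu\geq0$ the objective depends on $\bw$ only through the Rayleigh quotient $\bw^H\bS(\mu)\bw$ over unit vectors, whose maximum is the largest eigenvalue $\eta(\mu)$ of the $2\times2$ Hermitian matrix $\bS(\mu)$, namely \eqref{3etaMP}, attained at a corresponding unit eigenvector; continuity in $\bw$ lets this maximum be taken over the whole unit sphere, including $w_1=0$. Substituting $\eta(\mu)$ and rescaling by positive constants leaves exactly the scalar problem \eqref{3muhatMP} for $\hat\mu$ (a maximizer exists because $\eta(\mu)$ stays bounded as $\mu\to\infty$ while $\ln\det(\mu\bC_\bH+\sigma^2\bI_L)\to\infty$). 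Unwinding the reparametrization with $\bw=\hat{\be_1}=[E_1,E_2]^T$ a top eigenvector of $\bS(\hat\mu)$ then yields $\hat F_{ML}=E_2/E_1$ and, since $|E_1|^2=|w_1|^2=(1+|F|^2)^{-1}$, $\hat{\sigma_h^2}=|E_1|^2\hat\mu$, i.e.\ \eqref{3MLEMP}; the degenerate case $E_1=0$ is precisely the boundary $w_1=0$, so the supremum is then attained only in the limit $F\to\infty$.

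I expect the main obstacle to be the algebraic bookkeeping in the second step: showing that the cross terms in $\Trace[\bR^{-1}\bZ\bZ^H]$ recombine so cleanly, in particular that the $\sigma^{-2}\lVert\bW_2\rVert^2$ piece is absorbed into a $\btt$-independent constant via unitary invariance together with the resolvent identity, and, relatedly, verifying that the $(\bw,\mu)$ parametrization is a genuine bijection so that maximizing over $(\bw,\mu)$ is equivalent to maximizing over $(F,\sigma_h^2)$ with the $F\to\infty$ limit correctly identified. The rest---the Kronecker-product diagonalization, the $2\times2$ eigenvalue formula, and the final substitution---is routine.
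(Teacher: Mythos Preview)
Your argument is correct, but it proceeds by a genuinely different decomposition than the paper's. The paper first diagonalizes the temporal correlation, writing $\bC_\bH=\bV^H\diag(\lambda_1,\ldots,\lambda_L)\bV$, and then works \emph{row-wise}: for each packet $k$ it forms $\bw_{k1}=[\bV\bY_1]_k$, $\bw_{k2}=[\bV\bY_2]_k$, observes that the $2\times2$ covariance $\bC_k$ has eigenvalues $\mu\lambda_k+\sigma^2,\sigma^2$ with eigenvector $\be_1\propto[1,F]^T$, sums the per-packet log-likelihoods over $k$, and only at the end recognizes $\sum_k \frac{\mu\lambda_k}{\mu\lambda_k+\sigma^2}[\bV\bY_i\bY_j^H\bV^H]_{kk}$ as the trace in the definition of $S_{ij}(\mu)$. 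You instead work \emph{column-wise}, keep $\bC_\bH$ intact, and diagonalize the $2\times2$ block structure first via the unitary $\bQ=[\bw,\bw^{\perp}]$; the resolvent identity $(\mu\bC_\bH+\sigma^2\bI_L)^{-1}=\sigma^{-2}\bI_L-\sigma^{-2}\bM(\mu)$ then delivers $S_{ij}(\mu)$ directly, without ever introducing the eigenvalues $\lambda_k$. Both routes land on the same Rayleigh quotient $\bw^H\bS(\mu)\bw$ and the same scalar problem. The paper's version is slightly more elementary (no Kronecker algebra, just scalar sums) but needs the extra bookkeeping step of reassembling the trace formula; your coordinate-free version is cleaner in that it never unpacks $\bC_\bH$ and so would generalize more readily, and it also makes explicit the existence of a maximizer in $\mu$ via the boundedness of $\bM(\mu)$, a point the paper leaves implicit.
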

    
    \begin{proof}  
        For any matrix $\bA$, denote the $kj$-th element and $k$-th row by $[\bA]_{kj}$ and $[\bA]_{k}$, respectively. Let $\bC_\bH = \bV^H {\rm diag}[ \lambda_1 , \ldots , \lambda_L] \bV$ be an eigen-decomposition of $\bC_\bH$, where $\lambda_1 \geq \ldots \geq \lambda_L \geq 0$ are eigenvalues of $\bC_\bH$, and $\bV$ is a unitary matrix such that $\bV \bV^H = \bV^H \bV = \bI_L$.
        It follows the elements of $\bV \bH$  are independent with $[\bV \bH]_{kj} \sim\mathcal{CN}(0, \sigma_h^2 \lambda_k)$.
        
        For $1 \leq k \leq L$, let $\bw_{k1} \define [ \bV \bY_1]_k$ and $\bw_{k2} \define [ \bV \bY_2]_k$. From Theorem~\ref{sslemmaMP}, $\bw_{k1}$ and $\bw_{k2}$ are conditionally independent given $[\bV\bH]_k$, with conditional distributions
        $\bw_{k1} \sim\mathcal{CN}([\bV\bH]_k, \sigma^2 \bI_N)$
        and $\bw_{k2} \sim\mathcal{CN}(F[\bV\bH]_k, \sigma^2 \bI_L)$. Since  $[\bV\bH]_k \sim\mathcal{CN}({\bf 0}_N , \sigma_h^2 \lambda_k \bI_N )$ is independent of the noise in \er{ysMP}, it follows $\bw_k \define ( \bw_{k1}, \bw_{k2})^T$ is a zero-mean Gaussian random vector with covariance
        \bea
        \bC_{{\bw}_k} \ \define \ E \left[ \bw_k^H \bw_k \right] &= &    \bC_k \otimes \bI_N  \ ,
        \eea
        where we define
        \bea\label{3Ckdef}
        \bC_k \ \define \  \begin{bmatrix}
            \sigma_h^2 \lambda_k + \sigma^2 & \sigma_h^2 F^* \lambda_k \\
            \sigma_h^2 F \lambda_k & \sigma_h^2 |F|^2 \lambda_k + \sigma^2
        \end{bmatrix}  \ .
        \eea
        Note 
        $\bC_k$ can be written in terms of its eigensystem as
        \bea
        \bC_k \ = \ \mu_{k1}\be_1\be_1^H +  \mu_2\be_2\be_2^H \ ,
        \eea
        where $\mu_{k1} \geq \mu_2$ are the ordered eigenvalues and $\be_1, \be_2$ are the associated unit eigenvectors. From \er{3Ckdef}, it is easy to verify the following explicit formulas,
        \bea\lb{eigensystemMP}
        \mu_{k1} & = & \mu \lambda_k + \sigma^2 \ , ~~~\mu_2 \ = \ \sigma^2 \\
        \be_1 & = & \frac{1}{\sqrt{1+|F|^2}}\begin{bmatrix}
            1\\
            F
        \end{bmatrix} \ , ~ \be_2 =  \frac{1}{\sqrt{1+|F|^2}}\begin{bmatrix}
            -F^*\\
            1
        \end{bmatrix} \ . \nn
        \eea
        where $\mu \define \sigma_h^2 (1+|F|^2)$. Note only $\mu_{k1}$ depends on $k$. As in \cite[eq.~31]{wu_pca}, we can simplify $\ln p\left(\bw_{k1} , \bw_{k2} ;\btt\right)$ into, 
        \bea\label{llfMP}
        B_k + \frac{N}{\sigma^2} \left[ \frac{\mu \lambda_k}{\mu \lambda_k + \sigma^2} \be_1^H\bS_k\be_1 - \sigma^2 \ln(\mu \lambda_k + \sigma^2) \right]\nn \ ,
        \eea
        where $B_k$ does not depend on $\mu$ or $\be_1$ and we define
        \bea\label{samp_covMPk}
        &&\bS_k \ \define \ \frac{1}{N}\begin{bmatrix}
            \bw_{k1}\bw_{k1}^H& \bw_{k1}\bw_{k2}^H\\
            \bw_{k2}\bw_{k1}^H & \bw_{k2}\bw_{k2}^H
        \end{bmatrix}\nn\\
        &= & \frac{1}{N}\begin{bmatrix}
            [ \bV \bY_1 \bY_1^H \bV^H]_{kk} & [ \bV \bY_1 \bY_2^H \bV^H]_{kk}\\
            [ \bV \bY_2 \bY_1^H \bV^H]_{kk} & [ \bV \bY_2 \bY_2^H \bV^H]_{kk}
        \end{bmatrix} \ .
        \eea
        Since $\bw_1, \ldots , \bw_L$ are independent, the
        joint probability of $\bY_1$ and $\bY_2$ is then given by
        \bea\label{3llrMP}
        &&\ln p( \bY_1 , \bY_2 ; \btt) \ = \ \sum_{k=1}^L \ln p\left(\bw_{k1} , \bw_{k2} ;\btt\right)  \\
        &=& B + \frac{N}{\sigma^2} \left[ \be_1^H\bS( \mu ) \be_1 -\sigma^2\sum_{k=1}^L \ln(\mu \lambda_k + \sigma^2) \right] \ ,\nn
        \eea
        where $B$ does not depend on the parameters $\btt$ and
        \beq
        \bS ( \mu ) \ \define \ \sum_{k=1}^L \frac{\mu \lambda_k}{\mu \lambda_k + \sigma^2}  \bS_k 
        \eeq
        is the matrix in \er{samp_covMP}. To see this, let
        $\Lambda \define {\rm diag}( \lambda_1 , \ldots, \lambda_L)$ and observe
        \bea\label{3Sijdef2}
        [\bS ( \mu )]_{ij} & \define & \sum_{k=1}^L \frac{\mu \lambda_k}{\mu \lambda_k + \sigma^2} [ \bV \bY_i \bY_j^H \bV^H]_{kk} \nn \\
        & = & \sum_{k=1}^L  \left[ \mu \Lambda \left(\mu \Lambda + \sigma^2 \bI_L \right)^{-1} \bV \bY_i \bY_j^H \bV^H \right]_{kk} \nn \\
        & = & {\rm Tr}  \left[ \mu \bC_\bH \left(\mu \bC_\bH + \sigma^2 \bI_L \right)^{-1} \bY_i \bY_j^H \right] \ .
        \eea
        
        To find maximum-likelihood estimates of $F$ and $\sigma_h^2$, we proceed in two steps: First, we find conditions on $\mu$ and $\be_1$ that achieve the maximum in \er{3llrMP}. Second, we use \er{eigensystemMP} to translate these conditions into values of
        $F$ and $\sigma_h^2$.
        
        For each $\mu$, the maximum of \er{3llrMP} over $\be_1$ is a unit eigenvector corresponding to the largest eigenvalue of $\bS ( \mu )$. 
        From \cite[eq.~24]{wu_pca}, it is shown this eigenvalue is $\eta ( \mu )$ in \er{3etaMP}.
        It follows that the maximum-likelihood estimate of $\mu$ is
        \beas
        \hat{\mu} & \define & {\rm arg} \max_{\mu \geq 0} \left[ \eta ( \mu ) - \sigma^2 \sum_{k=1}^L \ln(\mu \lambda_k + \sigma^2) \right]  \ ,
        \eeas
        which equals \er{3muhatMP}, since $\sum_{k=1}^L \ln(\mu \lambda_k + \sigma^2) \ = \ \ln {\rm det} [\mu \bC_\bH + \sigma^2 \bI_L ]$.
        
        Finally, we translate these conditions into values of $F$ and $\sigma_h^2$: If $\hat{\mu}=0$, $\bS( \hat{\mu} )$ vanishes and $\ln p( \bY_1 , \bY_2 ; \btt)$ does not depend on $F$. From \er{eigensystemMP}, it follows the likelihood is maximized by $\hat{\sigma}_h^2 = 0$ and any value of $F$; In particular, \er{3MLEMP} maximizes the likelihood. However, if $\hat{\mu} > 0$, then $\bS( \hat{\mu} )$ is not zero and $\be_1$ must be an eigenvector of $\bS( \hat{\mu} )$ corresponding to $\eta ( \hat{\mu} )>0$. For $E_1 \neq 0$, the unique solution to the equations $\hat{\mu} =\sigma_h^2 (1+|F|^2)$ and $\be_1 = (E_1,E_2)^T$ in \er{eigensystemMP} is given by \er{3MLEMP}. For $E_1=0$, no finite $F$ solves these equations; rather, the solution (and maximum) is approached in the limit as $F \rightarrow \infty$. \hfill $\diamond$
    \end{proof}
    
    Theorem \ref{3ThrmMLMP} reduces finding the ML estimators to solving a scalar optimization. In general, this optimization must be done numerically. If the fading channel is i.i.d., i.e., $\bC_\bH=\bI_L$, then \er{3MLEMP} is in closed-form \cite[eq.~22]{wu_pca}.  These i.i.d.  ML estimators are the method of moments (MM) estimators in correlated fading, and serve as a reference to \er{3muhatMP}. 
    
    The entries of the Fisher information matrix (FIM) have been derived, for $1\leq i, j\leq 2$, using \cite[pg. 529]{kay} and extension of (15.60) in Kay \cite[pg. 531]{kay},
    \beq
    [\boldsymbol{\mathcal{I}(\btt)}]_{ij} \ = \ N\cdot \sum_{m=1}^{L}\Trace\left[\bC_k^{-1}\frac{\partial \bC_k}{\partial \theta_i^*}\bC_k^{-1}\frac{\partial \bC_k}{\partial \theta_j}\right] \ ,
    \eeq
    where $\bC_k$ is given in \er{3Ckdef}. We derive the FIM as
    \beq\label{FIM}
    \boldsymbol{\mathcal{I}}(\btt) \ = \ \sum_{k=1}^{L}\frac{N(1+|F|^2)\lambda_k^2}{\left[\lambda_k\sigma_h^2(1+|F|^2) + \sigma^2\right]^2}\boldsymbol{\mathcal{F}}_k \ ,
    \eeq
    where we define
    \beq
    \boldsymbol{\mathcal{F}}_k \ \define \ \begin{bmatrix}
        (\sigma_h^2)^2\left(\frac{\lambda_k\sigma_h^2}{\sigma^2}+1\right)& F\sigma_h^2 \\
        F^*\sigma_h^2& 1+|F|^2
    \end{bmatrix} \ .
    \eeq
    For any unbiased estimators $\hat{\btt}$, the classical Cram\'er-Rao bound (CRB) is then calculated as the inverse of FIM,
    \beq\label{CRB}
    E\left[\left(\hat{\btt}-\btt\right)\left(\hat{\btt}-\btt\right)^H \right]\ \geq \ \boldsymbol{\mathcal{C}}(\btt) \ = \ \boldsymbol{\mathcal{I}}^{-1}(\btt) \ .
    \eeq
    The CRBs for general channel correlation $\bC_\bH$ will be evaluated numerically in the next section.
    
    \section{Numerical Results}\lb{3secV}
    In this section, we explore the performance of estimators in the previous section through numerical examples. Consider a narrow-band MISO communications system with $N=4$ transmit antennas, whose carrier frequency is 900 MHz. 
    The duration of each packet equals a slot of 5G NR (New Radio), i.e., $T_s=1$ ms.
    Block-fading channel is assumed, such that during one packet, the channel remains the same, but it generally varies from packet to packet \cite{bigu}. Other settings follow from \cite[Sec.~IV]{wu_pca}. 
    Based on \er{3observationsMP}, the average post-detection signal-to-noise ratio (SNR) of a received symbol is
    \beq\label{eq_SNR}
    {E\left[\left|\bh^T\bx\right|^2\right]} \ = \ P \sigma_h^2 \ .
    \eeq
    
    Assume Clarke's model for the normalized channel correlation matrix $\bC_\bH$ \cite{zhen,badd}. 
    The maximum Doppler frequency is
    $f_d\define v/\lambda$,
    where $v$ is the velocity  of the fastest moving scatterer and $\lambda$ the wave-length of the carrier frequency \cite[Sec.~IV]{wu_det}.
    
    %
    %
    \begin{figure}
        \vspace{-12pt}
        \centering
        \includegraphics[width=.5\textwidth, keepaspectratio=true]{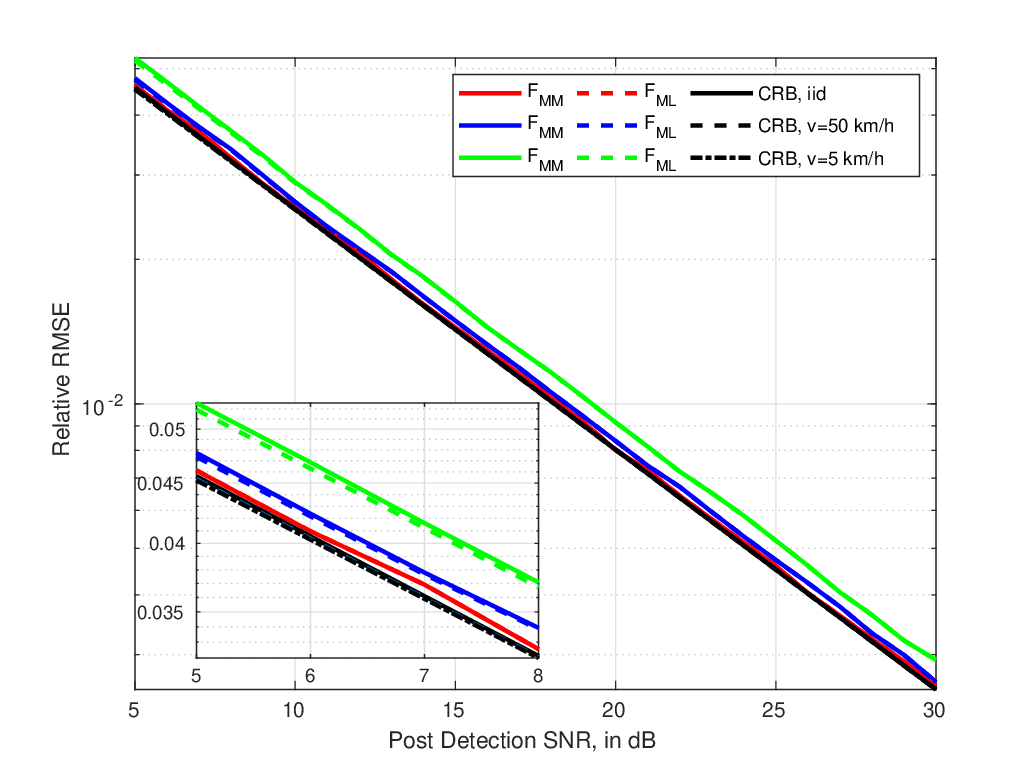}
        \vspace{-18pt}
        \caption{Impedance Estimation under Different Fading.}
        \vspace{-12pt}
        \label{fig_FML_FMM}
    \end{figure}
    
    In Fig.~\ref{fig_FML_FMM}, we compare the behavior of our derived ML estimator $\hat{F}_{ML}$ in \er{3MLEMP} against a reference estimator \cite[eq.~22]{wu_pca}, which we call $\hat{F}_{MM}$. Relative root-mean-square error (RMSE) is used as the metric for performance comparison.  Different fading conditions, fast (i.i.d.), medium ($v=50$ km/h), and slow fading ($v=5$ km/h) are assumed, each with $L=10$  packets. For i.i.d. fading, $\hat{F}_{ML}$ and $\hat{F}_{MM}$ are identical, and their curves completely overlap. In the other fading conditions, the optimal $\hat{F}_{ML}$ exhibits negligible improvement over the simple $\hat{F}_{MM}$ for all SNR considered. 
    Also, the 1 dB gap between $\hat{F}_{ML}$  (or $\hat{F}_{MM}$) and the Cram\'er-Rao bound in \er{CRB} under slow fading is because the CRB could be loose for finite samples. When the fading channel is less correlated, more significant eigenvalues of $\bC_\bH$ exist and hence more independent observations for estimating $F$. This explains the narrower gap to the CRB for fast (i.i.d.) and medium fading conditions. To this end, a rule of thumb is, to be 1 dB within the CRB, a combined 4 orders of diversity, temporal and/or spatial, is needed.  
    Note $\hat{F}_{MM}$ can be obtained in closed-form via direct calculation, but $\hat{F}_{ML}$ is generally found via iterative numerical methods, e.g., a line search. Thus, practical systems may choose $\hat{F}_{MM}$ over $\hat{F}_{ML}$ for a better performance-complexity trade-off. 
    
    \begin{figure}
        \vspace{-12pt}
        \centering
        \includegraphics[width=.5\textwidth, keepaspectratio=true]{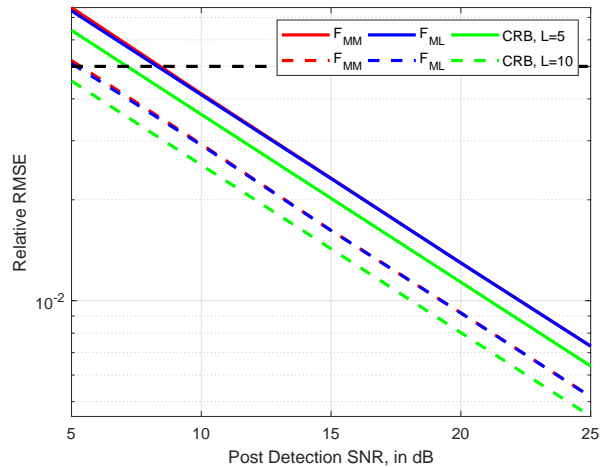}
        \vspace{-18pt}
        \caption{Impedance Estimation under Slow Fading.}
        \vspace{-12pt}
        \label{fig_FML_FMM_slow}
    \end{figure}
    
    In Fig.~\ref{fig_FML_FMM_slow}, we consider a slow fading scenario with packet lengths, $L=5, 10$. Although both $L=5$ to 10 packets lead to only one significant eigenvalue out of the channel correlation matrix $\bC_\bH$,  the doubling in power results in a drop in RMSE by about 3 dB. Moreover, since slow fading is the most different fading condition than i.i.d. fading (where $\hat{F}_{MM}$ is the ML estimator), if the generally optimal $\hat{F}_{ML}$ fails to demonstrate a sizable gain over its counterpart in $\hat{F}_{MM}$, then the closed-form $\hat{F}_{MM}$ may be preferable due to its more efficient implementation in practical systems. 
    
    \section{Conclusion}\lb{3secVI}
    In this paper, we formulated the antenna impedance estimation problem at a MISO receiver in classical estimation. 
    We derived the maximum-likelihood estimator for antenna impedance under generally correlated Rayleigh fading channels. This MLE can be found via scalar optimization. By comparing to a reference, the method of moments (MM) estimator derived in a prequel, we observed the ML and MM estimators exhibit similar RMSE and both approach their CRB's given sufficient degrees of diversity, spatial and/or temporal. A rule of thumb is four degrees of diversity are needed to be within 1 dB of CRB. The MM estimator demonstrated an overall better performance-complexity trade-off. These findings suggest a fast principal-components-based algorithm to estimate antenna impedance in real time for all Rayleigh fading conditions. 
    
    A future direction might be to evaluate the benefit of our derived estimators using system-level metrics, e.g., ergodic capacity. 
    \bibliographystyle{unsrt}
    
\end{document}